\newtheorem{theorem}{Theorem}
\newtheorem{lemma}{Lemma}
\newtheorem{definition}{Definition}
\newcommand{\struktura}{{\mathbb D}}
\newcommand{\odcinki}{{\mathrm I}(\struktura)}
\newcommand{\todo}[1]{\framebox[1.1\width]{\bf Unfinished}}
\author{Draft (\today) \\\bigskip \\
Jerzy Marcinkowski,  Jakub Michaliszyn\\
Institute of Computer Science,\\ University Of Wroclaw,\\
ul. Joliot-Curie 15, 50-383 Wroclaw, Poland\\
\url{{jma,jmi}@cs.uni.wroc.pl}
}
\title{The Last Paper on the Halpern--Shoham  Interval Temporal Logic}
\begin{document}
\maketitle

\begin{abstract}
The Halpern--Shoham logic is a  modal logic of time intervals.
Some effort has been put in last ten years to classify fragments of this beautiful 
 logic with respect to decidability of its satisfiability problem.
 We contribute to this effort by showing -- what we believe is quite an unexpected result -- 
that the  logic of subintervals, the fragment of the Halpern--Shoham  where only the operator ``during'', or $D$, is allowed, 
is undecidable over discrete structures. 
This is surprising as this logic is decidable over dense orders \cite{APS09} and its reflexive variant
 is known to be  decidable over discrete structures \cite{APS10}. 
Our result subsumes a lot of previous results for the discrete case, 
like the undecidability for $ABE$ \cite{HS91}, $BE$ \cite{Lo00}, $BD$ \cite{MM09}, $ADB$, $A\bar{A}D$, and so on \cite{BDG08,BGM09}.
\end{abstract}

\section{Introduction}

In classical temporal logic structures are defined by assigning
properties (propositional variables) to points of time (which is an
ordering, discrete or dense). However, not all  phenomena  can
be well described by such logics. Sometimes we need to talk about
actions (processes) that take some time and we would like to be able
to say that one such action takes place, for example, during or after another.

The Halpern--Shoham logic \cite {HS91}, which is the subject of this
paper, is one of the modal logics of time intervals.
Judging by the number of papers published, and by the amount of work
devoted to the research on it, this logic is probably the
most influential of time interval logics. But historically it was not
the first one. Actually, the earliest papers about intervals in
context of modal logic were written  by philosophers, e.g.,
\cite{Ha71}. In computer science,  the earliest attempts to formalize time intervals were process logic \cite{Par78,Pra79} and  interval temporal logic  \cite{Mo83}. Relations
between intervals in linear orders from an algebraic point of view were
first studied systematically by Allen \cite{Al83}.

The Halpern--Shoham logic is a modal temporal logic, where the elements
of a model are no longer --- like in classical temporal logics ---
points in time, but rather pairs of points in time. Any such pair --- call it
$[p, q]$, where $q$ is not earlier than $p$  --- can be
viewed as a (closed) time interval, that is, the set of  all time
points between $p$ and $q$. HS logic does not assume anything about
order --- it can be discrete or continuous, linear or branching,
complete or not.

 Halpern and Shoham introduce six modal operators, acting on
intervals. Their operators are ``begins'' $B$, ``during'' $D$, ``ends'' $E$, ``meets'' $A$, ``later'' $L$, ``overlaps'' $O$ and the six inverses of those operators: $\bar{B}, \bar{D}, \bar{E}, \bar{A},
\bar{L}$, $\bar{O}$.
It is easy to see that the set of operators is redundant. The ,,more expressive'' of them, which are $A,B$ and $E$   can define $D$ 
($B$ and $E$ suffice for that -- a prefix of my suffix is my infix)
and $L$ (here $A$ is enough --``later'' means ``meets an interval that meets''). The operator $O$ can be expressed using $E$ and $\bar{B}$.

In their paper, Halpern and Shoham show that (satisfiability of
formulae of) their logic is undecidable. Their proof requires logic with 
five operators ($B, E$ and $A$ are explicitly used in the formulae and, as
we mentioned above, once  $B, E$ and $A$ are allowed, $D$ and $L$ come
for free) so they state a question  about decidable fragments of their logic.

Considerable  effort has been put since this time  to settle this question.  First,
 it was shown \cite{Lo00} that the  $BE$ fragment is undecidable. Recently, negative results were also given  
 for the classes $B\bar{E}$, $\bar{BE}$, $\bar{B}E$,
$A\bar{A}D, \bar{A}D^*\bar{B}, \bar{A}D^*B$
\cite{BDG08,BGM09}, and $BD$ \cite{MM09}. Another elegant negative result was that $O\bar{O}$ is undecidable over discrete orders \cite{BDG09,BDG10}.

On the positive side, it was shown that some small fragments, like
$B\bar{B}$ or $E\bar{E}$, are decidable and easy to
translate into standard, point-based modal logic \cite{Gm04}. The
fragment using only $A$ and $\bar{A}$ is a bit harder and its
decidability was only recently  shown \cite{BGM09,BMS08}.
Obviously, this last  result implies decidability of $L\bar{L}$
as $L$ is expressible by $A$. Another fragment known to be decidable is $AB\bar{B}$ \cite{MPS09}.

The last interesting fragment of the Halpern and Shoham logic of unknown status was the, apparently very simple,
 fragment with the single operator $D$ (,,during''), which we call
here {\em the logic of sub-intervals}. Since $D$ 
does not seem to have much expressive power (an example of a formula would be ,,each morning I spend a while thinking of you'' or 
,,each nice period of my life contains an unpleasant fragment'') logic of sub-intervals  was widely believed to be decidable. A number of decidability
results concerning variants of this logic has been published. For example, it was shown in (\cite{BGM08,APS09}) than satisfiability of formulae 
of logic of subintervals is decidable over dense structures. In  \cite{APS10}.  decidability is proved for (slightly less expressive) ,,reflexive $D$''.
The results in \cite{SZ10}  imply that $D$ (as well as some richer fragments of the HS logic) is decidable if we allow models, in which not
all the intervals defined by the ordering are elements of the Kripke structure.

In this paper we show that satisfiability of formulae from 
the $D$ fragment is undecidable over the class of finite orderings as well as over the class of all
 discrete orderings. Our result subsumes the negative results  for the discrete case for  $ABE$ \cite{HS91}, $BE$ \cite{Lo00}, $BD$ \cite{MM09} and
$ADB$, $A\bar{A}D$  \cite{BDG08,BGM09}.

\subsection{Main theorems}

Our contribution consists of the proofs of the following two theorems:

\begin{theorem}\label{main}
The satisfiability problem for the formulae of the  logic  of subintervals, over models which are suborders of the order 
$\langle {\mathbf Z}, \leq \rangle$, is undecidable.
\end{theorem}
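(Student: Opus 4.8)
The plan is to reduce an undecidable problem—most naturally the halting problem for Minsky (two-counter) machines, or equivalently a tiling problem for the quarter-plane—to satisfiability of a $D$-formula over suborders of $\langle \mathbf{Z}, \leq\rangle$. The fundamental difficulty is that $D$ is a very weak operator: from an interval $[p,q]$ it can only reach intervals $[r,s]$ with $p < r \le s < q$ (strict subintervals), so a formula cannot directly ``see'' the endpoints of an interval, cannot distinguish the order of two incomparable subintervals, and cannot even force the existence of a point-interval. So the first and most technical step will be to develop, inside the logic, enough combinatorial machinery to talk about a discrete grid despite this weakness.

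First I would isolate special intervals I will call \emph{unit} intervals—those $[p,q]$ with no point strictly between $p$ and $q$, equivalently the $D$-irreducible ones (they contain no proper subinterval; a formula $\mathit{unit} \equiv \neg D\top$ pins them down, and over a suborder of $\mathbf{Z}$ we must be careful that ``successor'' is relative to the chosen suborder). The key geometric idea is that a longer interval $[p,q]$ canonically determines the \emph{sequence} of unit subintervals it spans, and $D$ lets a formula quantify over all subintervals of $[p,q]$, hence in particular over all these units and over all the two-unit, three-unit, \dots\ blocks inside it. So I would try to encode a single row of a computation (a bounded-in-one-direction tape, or one configuration of the Minsky machine) as the sequence of unit intervals inside one ``long'' interval, decorating each unit with a constant number of propositional variables carrying the tape symbol / machine state.

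The heart of the construction is then forcing \emph{consistency between consecutive rows}. This is where I expect the main obstacle to be, and where the paper presumably does something clever: with only $D$ I cannot say ``the interval representing row $n+1$ sits immediately to the right of the one representing row $n$''. The standard trick in this area is to make a single interval host \emph{two} consecutive rows at once—e.g.\ have a ``frame'' interval whose left half carries row $n$ and whose right half carries row $n+1$, and whose overlap with the next frame shares row $n+1$—and then write $D$-formulae that (i) propagate the row-structure correctly into the two halves, (ii) enforce that the shared row is encoded identically in both frames (a ``uniqueness/determinism'' condition expressed purely via subintervals, likely by marking corresponding cells and arguing that any two subintervals with the same marking pattern must carry the same data), and (iii) impose the local transition rules of the Minsky machine as a condition relating a cell of the left half to the same-position cell of the right half. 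A final ``initial configuration'' formula plus a ``never halt'' (or ``eventually halt'', depending on which direction of the reduction one wants) formula closes the argument: the formula is satisfiable over a suborder of $\mathbf{Z}$ iff the machine has the corresponding infinite/halting run. The delicate points will be (a) ruling out unintended models—because we work over \emph{suborders} of $\mathbf{Z}$, the adversary can delete points, so the grid-encoding formulae must be robust against a sparse or irregular underlying order—and (b) expressing ``same position in two rows'' without any operator that compares lengths, which presumably forces the encoding of a cell's address in unary along the interval together with a $D$-expressible matching condition between an address-marked subinterval on the left and one on the right.
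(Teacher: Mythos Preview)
Your high-level plan (reduce from two-counter machines, encode configurations in unary along the leaves, and somehow transfer a position from one configuration to the next) is indeed what the paper does, but your proposal stops precisely at the step that carries all the weight and that you flag as ``delicate'': expressing \emph{same position in two adjacent configurations} with $D$ alone. You do not propose any concrete mechanism for this, and the ``address in unary plus a $D$-expressible matching condition'' remains a placeholder. The paper's solution is not a frame-with-two-rows construction; instead it writes the entire computation as a single word in the leaves and introduces the key device of a \emph{cloud}: a proposition $p$ forced, by a short formula using one auxiliary alternating variable $e$, to hold at \emph{exactly} the intervals of some fixed length $n$. Once such a ruler exists, ``configuration has length $n{-}1$'' and ``the distance from a counter mark to its shadow in the next configuration is $n$'' become simple $D$-statements (you see two states $\Rightarrow$ you see a $p$; you see only the interior of a configuration $\Rightarrow$ you see no $p$), and the counter value is transported faithfully from one configuration to the next. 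Without an analogue of this cloud lemma, your matching step is a gap rather than a detail.

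Two further points. First, you do not address the symmetry of $D$: the operator cannot tell left from right, so you cannot speak of ``left half / right half'' of a frame, nor of ``the leftmost leaf'', without extra machinery. The paper handles this with endpoint markers $L,R$ together with a $3$-colouring $s_0,s_1,s_2$ of the leaves cycling mod $3$, which lets a length-$1$ interval recover which of its two leaves is the predecessor and which the successor; your proposal needs something equivalent. Second, your identification of ``unit intervals'' with the $D$-irreflexive ones is off under the paper's semantics: here $[a,b]$ sees every $[a',b']$ with $a\le a'$, $b'\le b$, $[a',b']\neq[a,b]$, so the $[D]\bot$-intervals are the \emph{point} intervals $[a,a]$ (the ``leaves''), and an interval of length $1$ already has two proper subintervals. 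Finally, note that over suborders of $\mathbf Z$ the initial interval has only finitely many points, so the theorem is really about \emph{finite} satisfiability; the relevant reduction is from halting (a finite accepting run), not from non-halting.
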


Since truth value of a formula is defined with respect to a model and an initial interval in this model (see Preliminaries), and since the only allowed operator is $D$, which means that the truth value of a formula in a given interval depends only on the labeling of this interval and its subintervals
Theorem \ref{main} can be restated as: {\em The satisfiability problem for the formulae of the  logic  of subintervals, 
over finite models  is undecidable}, and it is this version that will be proved in Section \ref{natural}  .

\begin{theorem}\label{main-discrete}
The satisfiability problem for the formulae of the logic of sub-intervals, over all discrete models,  is undecidable.
\end{theorem}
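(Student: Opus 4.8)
The plan is to reduce Theorem~\ref{main-discrete} to Theorem~\ref{main} (equivalently, to its finite-model reformulation), by showing that the discrete case can express, or simulate, the finite case. The natural strategy is as follows. Given a formula $\varphi$ in the $D$-fragment, I want to produce a formula $\varphi'$ such that $\varphi$ is satisfiable over some finite model if and only if $\varphi'$ is satisfiable over some discrete model. One direction is essentially free: every finite linear order is a discrete order, so any finite model witnessing satisfiability of $\varphi$ is already a discrete model, and we can take $\varphi' $ to carry $\varphi$ along unchanged on the relevant part. The real content is the converse --- extracting, from an arbitrary (possibly infinite) discrete model of $\varphi'$, a finite model of $\varphi$.

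First I would introduce a fresh propositional variable, say $u$ (``universe''), and relativize every subformula of $\varphi$ to intervals labelled by $u$: replace each $\langle D\rangle\psi$ by $\langle D\rangle(u \wedge \psi')$, and add a top-level conjunct forcing the initial interval to satisfy $u$ together with a conjunct of the form $[D]\,\top$ asserting, via the machinery already available in the $D$-fragment, that the ``active'' interval has a bounded structure --- concretely, that there is a proper subinterval which is $D$-minimal, i.e.\ an interval with no proper subinterval at all (a unit interval $[p,p]$ or $[p,p+1]$), and that from the initial interval one reaches such minimal intervals in a bounded number of $D$-steps. Since the only operator is $D$ and truth at an interval depends solely on the labelling of that interval and its subintervals, the truth value of $\varphi'$ at the initial interval $I_0$ of a discrete model depends only on the suborder consisting of $I_0$ and all subintervals of $I_0$ that carry the label $u$. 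If I can force this suborder to be finite, I am done: it is itself a finite model of $\varphi$, and conversely any finite model of $\varphi$ yields a discrete model of $\varphi'$ by taking the order itself and setting $u$ true everywhere.

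The key step, then, is to write a $D$-formula that pins down finiteness of the relevant sub-structure. The crucial observation is that an interval $[p,q]$ in a discrete order has only finitely many subintervals if and only if both $p$ and $q$ have only finitely many points between them --- and discreteness already guarantees that \emph{every} bounded interval is finite; the only way to get infinitely many subintervals of a fixed interval $[p,q]$ in a discrete order is impossible, since $[p,q]$ is automatically finite. Hence the real subtlety is not finiteness of one interval but ensuring the construction in Section~\ref{natural} does not secretly rely on there being a first or last point. I would therefore take the finite model $M$ produced for Theorem~\ref{main} and observe that it embeds as an interval inside $\langle\mathbf Z,\leq\rangle$, then argue that the formula already constructed there, possibly with the relativization trick above so that labels outside the embedded copy of $M$ are irrelevant, is satisfiable over $\langle\mathbf Z,\leq\rangle$ exactly when it is over a finite order. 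The main obstacle I anticipate is the converse extraction: showing that an arbitrary discrete model of $\varphi'$ cannot ``cheat'' by using an interval with infinitely many relevant subintervals --- which in a \emph{discrete} order can only happen if the interval is itself unbounded, i.e.\ one or both endpoints is $\pm\infty$, which intervals in $\langle\mathbf Z,\leq\rangle$ are not, but which in a general discrete order (e.g.\ $\omega + \omega^*$) one must explicitly exclude. Ruling this out with a $D$-formula --- asserting that every subinterval of the initial interval is ``finitely deep'' --- is where the argument must be done carefully, and I expect it to be handled by a conjunct forcing each relevant interval to have a $D$-minimal subinterval, exploiting that in a discrete order an interval is finite iff it has a least and greatest point, both expressible as the existence of unit subintervals at the extremes.
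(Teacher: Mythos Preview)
Your proposal has a genuine gap at exactly the point you flag as delicate. You hope to write a $D$-formula forcing the initial interval to be finite, via ``a conjunct forcing each relevant interval to have a $D$-minimal subinterval,'' justified by the claim that ``in a discrete order an interval is finite iff it has a least and greatest point.'' Both of these fail. Every interval $[p,q]$ has least point $p$ and greatest point $q$ by definition, and in any discrete order the leaves $[p,p]$ and $[q,q]$ are always present as subintervals. In $\omega+\omega^*$ the interval from the global minimum to the global maximum has both endpoints, has $D$-minimal subintervals (indeed, a leaf below every subinterval), and yet contains infinitely many points. Your proposed test is vacuously satisfied everywhere and separates nothing. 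More generally there is no reason to expect that ``the initial interval is finite'' is expressible by any $D$-formula over the class of all discrete orders, so a uniform reduction of finite satisfiability to discrete satisfiability along the lines you sketch is unlikely to exist.

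The paper's own section for Theorem~\ref{main-discrete} is, in this draft, only the placeholder ``We can lift a cloud!'', but that hint already points in a different direction from yours. Rather than trying to force finiteness, one adapts the encoding from Section~\ref{natural} so that it remains sound over infinite discrete intervals. The cloud of Section~\ref{cloud}, which in the finite proof sits at a fixed height $n$ bounding all counter values of a halting run, is allowed to rise along the model, so that configurations may grow and an unbounded computation of the two-counter machine can be laid out inside a single infinite interval. One then reduces a problem such as non-halting (or the recurrent-state problem) rather than halting, matching what infinite discrete structures can naturally carry instead of trying to squeeze them back into the finite case.
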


\section{Preliminaries} \label{semantykad}

\noindent{\bf Orderings.} As in \cite {HS91}, we say that a total order $\langle \struktura, \leq \rangle$ is  \emph{discrete} if each element is either minimal (maximal) or has a unique predecessor (successor); in other words for all $a, b \in \struktura$ if $a<b$, then there exist points $a', b'$ such that $a < a'$, $b' <  b$ and there exists no $c$ with $a < c < a'$ or $b' < c < b $.

\vspace{10pt}
\noindent
{\bf Semantic of the $D$ fragment of logic HS (logic of sub-intervals).}
Let $\langle \struktura, \leq\rangle$ be a discrete ordered set
\footnote{To keep the notation light, we will identify the order $\langle \struktura, \leq \rangle$ with its set $\struktura$}.

 An \emph{interval} over $\struktura$ is a pair $[a, b]$ with $a, b \in \struktura$ and $a \leq b$. 
A \emph{labeling} is a function $\gamma : \odcinki \to {\cal P}({\cal V}ar)$, 
where $\odcinki$ is a set of all intervals over $\struktura$ and ${\cal V}ar$ is a finite set of variables. 
A structure of the form $\mathrm{M}=\langle \odcinki, \gamma \rangle$ is called a \emph{model}.

 We say that an interval $[a, b]$ is a \emph{leaf} iff it has no sub-intervals (i.e. $a=b$). 
 
The truth values of formulae are determined by the following (natural) semantic rules:

\begin{enumerate}
\item For all $v \in {\cal V}ar$ we have $\mathrm{M}, [a, b] \models v$ iff $v \in \gamma([a, b])$.
\item $\mathrm{M}, [a, b] \models \neg \varphi$ iff $\mathrm{M},[a, b] \not \models \varphi$.
\item $\mathrm{M}, [a, b] \models \varphi_1 \wedge \varphi_2$ iff $\mathrm{M}, [a, b]  \models \varphi_1$ and $\mathrm{M}, [a, b]  \models \varphi_2$.
\item $\mathrm{M}, [a, b] \models \langle D\rangle \varphi$ iff there exists an interval $[a', b']$ such that $\mathrm{M}, [a', b'] \models \varphi$, $a \leq a'$, $b' \leq b$, and $[a, b]\neq [a', b']$. In that case we say that $[a, b]$ \emph{ sees} $[a',b']$.
\end{enumerate}

Boolean connectives $\vee, \Rightarrow, \Leftrightarrow$ are introduced in the standard way. 
We abbreviate $\neg \langle D \rangle \neg \varphi$ by $[D]\varphi$ and $\varphi \wedge [D] \varphi$ by $[G] \varphi$. 

A formula $\varphi$ is said to be \emph{satisfiable} in a class of orderings $\cal D$ 
if there exist a structure $\struktura \in \cal D$, a labeling $\gamma$, and an interval $[a, b]$, called {\em the initial interval},
 such that $\langle \odcinki, \gamma\rangle, [a, b] \models \varphi$.
 A formula is satisfiable in a given ordering $\struktura$ if it is satisfiable in $\{\struktura\}$.

\Section{Proof of Theorem \ref{main}}\label{natural}

In Section \ref{natural} only consider finite orderings. 

\noindent{\bf Our representation.} We imagine the Kripke structure of intervals of a finite ordering 
  as a directed acyclic graph, where intervals 
are vertices and each interval $[a, b]$ with the length greater that $0$ has two successors: $[a+1, b]$ and $[a, b-1]$.
 Each level of this representation contains intervals of the same length (see Fig. \ref{exmple}).

\begin{figure}[ht!]
  \centering
    \includegraphics[width=230pt]{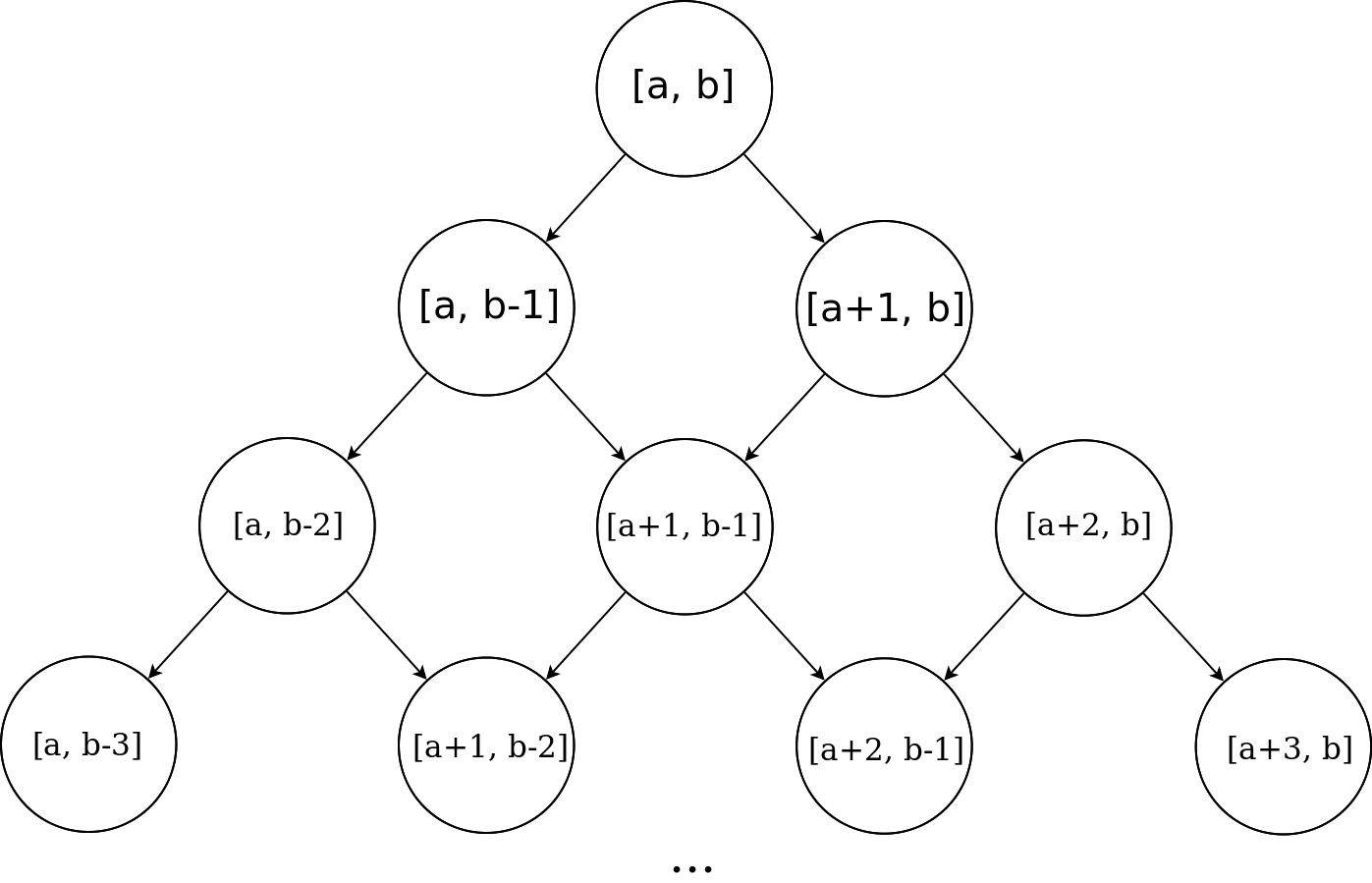}
  \caption{Our representation of order $\langle \{a, a+1, \dots, b\}, \leq \rangle$.}\label{exmple}
\end{figure}

\subsection{The Regular Language $L_{A}$}\label{jezykregularny}

In this section we will, for a given two-counter finite automaton (Minsky machine) $ A$, define a regular language  $L_A$ whose words will  almost\footnote{
See Lemma \ref{prawie} for an explanation what we mean by ''almost''.}
 encode the computation of $ A$ (beginning from the empty counters).

Let $Q$ be the set of states of  $A$, and let $Q'=\{q': q\in Q\} $. 
Define $B=\{f,f_l,f_r,  f',f'_l,f'_r, s,s_l,s_r,s',s'_l,s'_r \} $. 

The alphabet $\Sigma$ of $L_{A}$ will consist of all the elements of $ Q \cup Q'$ (jointly called {\em states}) 
and of all the  subsets (possibly empty) of $B$
which  consist of at most 4 elements: 
at most one of them from  $\{f,f_l,f_r \}$,
at most one from  $\{f',f'_l,f'_r \}$,
at most one from  $\{s,s_l,s_r    \}$,
and at most one from $\{s',s'_l,s'_r \}$.


Symbols of $\Sigma$ containing $f_l$ or $f'_l$ ($s_l$ or $s'_l$)  will be called {\em { f}irst} (resp. {\em { s}econd}) {\em counters}.  
Symbols of $\Sigma$ containing $f_r$ or $f'_r$ ($s_r$ or $s'_r$)  will be called {\em first} (resp. {\em second}) {\em shadows} 
(or {\em shadows of the first/the second counter}).

The language $L_{A}$ will consist of all the words $w$ over $\Sigma$ which satisfy all the following six conditions:

\begin{itemize}
\item The first symbol of $w$ is the beginning state $q_0$ of  $A$ and the last symbol of  
$w$ is either $q$ or $q'$, where $q$ is one of the final states of ${A}$.
\end{itemize}

By a {\em configuration} we will mean a maximal 
sub-word\footnote{By a sub-word we mean a sequence of consecutive elements of a word, an infix.} of $w$, whose first
 element is a state (called {\em the state of this configuration}) and which contains exactly one state (so that $w$ is split into disjoint configurations). 
A configuration will be called {\em even} if its state is from $Q$ and {\em odd} if it is from $Q'$.

\begin{itemize}
\item Odd and even configurations alternate in $w$.

\item Each configuration, except of the last one (which only consists  of the state)
 contains exactly one first counter and exactly one second  counter. If a configuration is even, then its
 first counter contains $f_l$ and its second counter contains $s_l$. If a configurations is odd, then its first counter
contains $f'_l$ and its second counter contains $s'_l$. The first non-state symbol of the first configuration 
is both a first counter and a second counter.

\item The are no shadows in the first and the last configuration. 
Each configuration, except of the first and the last, contains exactly one first shadow and exactly one second shadow.  
If a configuration is even, then its first shadow
contains $f'_r$ and its second shadow contains $s'_r$. If a configurations is odd, then its first shadow
contains $f_r$ and its second shadow contains $s_r$. 
\end{itemize}

It follows, from the conditions above, that if (in a word from the language $L_{A}$) there is a
counter containing $f_l$ ($f'_l, s_l, s'_l$) then there is its shadow $f_r$  (resp. $f'_r, s_r, s'_r$) in the subsequent
configuration. Call a sub-word beginning with first (second) counter and ending with its shadow {\em a first} (resp. {\em second}) {\em shade}. Notice, that the
above conditions imply in particular that each state (except of the first one and last one) is
 in exactly one first shade and  in exactly one second shade.

\begin{itemize}
\item A non-state symbol of $w$ contains $f$ ($f',s, s'$) if and only if it is inside some  shade beginning with $f_l$ (resp. $f'_l,s_l, s'_l$)
\end{itemize}

The last condition defining $L_{A}$  will depend on the  instructions of the automaton $ A$. We say that a configuration has 
{\em first} ({\em second}) {\em counter equal zero} if the first non-state symbol of this configuration contains
$f_l$ or $f'_l$ (resp. $s_l$ or $s'_l$). It is good to think, that the number of symbols before the first/second counter is
the value of this counter in the given configuration. Notice that the first configuration of a 
$w\in L_A$ is indeed the initial configuration of $A$ -- its state is $q_0$ and both its counters equal 0.

 Since the format of an instruction of $A$ is:\\

\noindent
{\tt If in state $q$\\
the first counter\\
equals/does not equal 0 and \\
the second counter\\
equals/does not equal 0\\
then change the state to $q_1$ and\\
decrease/increase/keep unchanged\\
 the first counter and\\ 
decrease/increase/keep unchanged\\
 the second counter.\\} 

\noindent
it is clear what we mean, saying that {\em configuration $C$ matches the assumption of the instruction $I$}.

\begin{itemize}   
\item If $C$ and $C_1$ are subsequent configurations in $w$, and $C$  matches the assumption of an instruction $I$, then:

\begin{itemize}
\item If $I$ changes the state into $q_1$ then the state of $C_1$ is $q_1$.

\item If $I$ orders the first (second) counter to remain unchanged, then the first (resp. second) counter in $C_1$ coincides 
with the first (resp. second) shadow in  $C_1$. 

\item If $I$ orders the first (second) counter to be decreased, then the first (resp. second) counter in $C_1$ is the direct predecessor of 
 the first (resp. second) shadow in  $C_1$. 

\item If $I$ orders the first (second) counter to be increased, then the first (resp. second) counter in $C_1$ is the direct successor of 
 the first (resp. second) shadow in  $C_1$. 
\end{itemize}
\end{itemize}

This completes the definition of the language $L_{A}$. It is clear, that it is regular. Our main tool will be the following:

\begin{lemma}\label{prawie}
The following two conditions are equivalent:

\begin{enumerate}[(i)]
\item
Automaton $A$, started from the initial state $q_0$ and empty counters, accepts.

\item  There exists a 
word $w\in L_{A}$ and a natural number $n$ such that:

\begin{itemize}
\item each configuration in $w$ (except of the last one, consisting of a single symbol)  has length $n-1$

\item each shade in $w$ has length $n$ (this includes the two symbols in the two ends of a shade).
\end{itemize}
\end{enumerate}
\end{lemma}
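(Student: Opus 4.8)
The plan is to establish the two implications of Lemma~\ref{prawie} separately, with the encoding of $A$'s computation into words of $L_A$ as the central construction. For the direction (i)$\Rightarrow$(ii), suppose $A$ halts after some finite computation $C_0, C_1, \dots, C_k$, where $C_0$ is the initial configuration and $C_k$ is accepting. Let $m$ be a bound on all counter values occurring in this computation, and set $n = m+2$ (so that each configuration, encoded as a block of length $n-1$, has room to place the counter symbol at position equal to the counter value, with at least one slot to spare). I would then build $w$ block by block: the $i$-th configuration $C_i$ becomes a length-$(n-1)$ block whose first symbol is the state (primed iff $i$ is odd), whose first/second counter symbols sit at the positions dictated by the counter values in $C_i$ (carrying the appropriate $f_l/f'_l$ or $s_l/s'_l$ tags by parity), and whose first/second shadow symbols sit at the positions of the \emph{previous} configuration's counters (carrying the complementary $f_r/f'_r$ tags). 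The filler symbols between $l$-markers and $r$-markers get the $f,f',s,s'$ tags exactly as forced by the ``inside a shade'' condition. One then checks, instruction by instruction, that the local transition conditions in the last bullet of the definition of $L_A$ hold: ``keep'' means counter position in $C_{i+1}$ equals shadow position (which equals counter position in $C_i$), ``decrease'' means it is one less, ``increase'' one more — exactly matching how $C_{i+1}$ is obtained from $C_i$. Every shade runs from a counter in one configuration to its shadow in the next, and since both configurations have length $n-1$ and the shadow copies the earlier counter's position, the shade length is precisely $n$.

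For the converse (ii)$\Rightarrow$(i), I would start from a word $w \in L_A$ with the uniform length parameter $n$ and read off a sequence of configurations of $A$. The key point is the remark already made in the excerpt: ``the number of symbols before the first/second counter is the value of that counter.'' Because every configuration block has the same length $n-1$ and every shade has the same length $n$, the shadow symbol in configuration $C_{i+1}$ sits at exactly the same offset-from-block-start as the counter symbol in $C_i$; hence the shadow faithfully transmits the value of $C_i$'s counter into $C_i$'s successor block, where the transition conditions then pin down $C_{i+1}$'s counter relative to that shadow. So the parity-alternation condition gives a well-defined sequence of configurations, the first-configuration conditions force $C_0$ to be $q_0$ with both counters $0$, the last-configuration condition forces a final state, and the instruction-matching bullets guarantee that consecutive $C_i, C_{i+1}$ are related exactly by some instruction of $A$ whose assumption $C_i$ matches. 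Therefore this sequence is a genuine accepting run of $A$ from empty counters.

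The main obstacle — and the reason the lemma is phrased with ``almost'' and needs the auxiliary parameter $n$ — is precisely the role of the uniform length. Without the requirement that all configurations have length $n-1$ and all shades length $n$, the regular language $L_A$ cannot compare the numerical position of a counter in one configuration with the position of its shadow in the next: a finite automaton has no way to verify that ``the number of symbols before this marker equals the number of symbols before that marker'' across a block of unbounded size. The shadows are the mechanism by which counter values are carried forward, but they only carry values \emph{correctly} when the block lengths match, so the shadow offset and the counter offset refer to the same absolute count. So in writing up (ii)$\Rightarrow$(i), the delicate step is the bookkeeping argument that, given the length constraints, ``shadow in $C_{i+1}$ has the same offset as counter in $C_i$'' — everything else is a routine unwinding of the six defining conditions of $L_A$ against the definition of a Minsky-machine step. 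For (i)$\Rightarrow$(ii), the only real content is choosing $n$ large enough (any strict upper bound on counter values plus a constant works, since the computation is finite and accepting) and then checking the local conditions mechanically.
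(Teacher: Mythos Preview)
Your proposal is correct and follows essentially the same approach as the paper's proof: for (i)$\Rightarrow$(ii) choose $n$ larger than all counter values in the accepting run and lay out the configurations as equal-length blocks, and for (ii)$\Rightarrow$(i) observe that the uniform length constraints force each shadow's offset in $C_{i+1}$ to equal the corresponding counter's offset in $C_i$, so the local transition conditions of $L_A$ really do encode Minsky-machine steps. Your write-up is simply a more explicit version of the paper's terse argument; the key ideas and their order are the same.
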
 

\begin{proof} For the $\Rightarrow$ direction consider an accepting computation of $A$ and take $n$ as any number greater than all the numbers that
appear on the two counters of $A$ during this computation. For the  $\Leftarrow$ direction notice that the distance constraint from (ii) imply,
that the distance between a state and the subsequent first (second) shadow equals the value of the first (resp. second) counter in the previous
configuration. Together with the last of the six conditions defining $L_A$ this implies that the subsequent configurations in $w\in L_A$
can indeed be seen as subsequent configurations in the valid computation of $A$.

\end{proof}

Since the halting problem for two-counter automata is undecidable, the proof of Theorem \ref{main} will be completed when we write, 
for a given automaton $A$, a formula $\Psi $ of the language of the  logic of sub-intervals
which is satisfiable (in a finite model) 
if and only if condition (ii) from Lemma \ref{prawie} holds. Actually, what the formula $\Psi $ is
going to say is, more or less, that the word written (with the use of the labeling function $\gamma$) in the leaves of the 
model is a word $w$ as described in Lemma \ref{prawie} (ii).

In the following subsections we are going to write formulae $\Phi_{\mathrm{orient}}$, $\Phi_{L_A}$, $\Phi_{\mathrm{cloud}}$ and $\Phi_{\mathrm{length}}$, such that
$\Phi_{\mathrm{orient}} \wedge \Phi_{L_A} \wedge \Phi_{\mathrm{cloud}} \wedge \Phi_{\mathrm{length}}$ will be the formula $\Psi $ we want.

\subsection{Orientation}\label{orientation}

As we said, we want to write a formula saying that the word written in the leaves of the 
model is the $w$ described in Lemma \ref{prawie} (ii).

The first problem we need to overcome  is the  symmetry of  $D$ -- the operator does not 
see a difference between past and future, or between left and right, so how can we distinguish between the beginning of $w$ and its end?
  We deal with this problem by  introducing five variables $L, R, s_0, s_1, s_2 $ and writing a formula  $\Phi_{\mathrm{orient}}$
which will be satisfied  by an interval $[a, b]$ if 
$[a, a]$ is the only interval that satisfies $L$
and  $[b, b]$  is the only interval that satisfies $R$, or $[b, b]$ is the only interval that satisfies $L$
and  $[a, a]$  is the only interval that satisfies $R$, and  if  all the following conditions hold:

\begin{itemize}
\item any interval that satisfies $L$ satisfies also $s_0$;
\item each leaf is labeled either with $ s_0 $ or with $ s_1$ or with $ s_2 $;
\item each interval labeled with $ s_0 $ or with $ s_1$ or with $ s_2 $  is a leaf;
\item if $c,d,e$ are three consecutive leaves of $[a, b]$ and if $s_i$ holds in $c$, $s_j$ holds in $d$ and $s_k$ holds in $e$ then
$\{i,j,k\}= \{0,1,2\}$.
\end{itemize}

If $[a,b]\models \Phi_{\mathrm{orient}}$ then the leaf of $[a,b]$ where $L$ holds (resp. where $R$ holds) will be called 
the left (resp. the right) end of $[a,b]$. 
%
%

 Let $exactly\_one\_of(X) = \bigvee_{x \in X} (x \wedge \bigwedge_{x' \in {X}\setminus \{x\}} \neg x')$ be a 
formula saying (which is not hard to guess) that exactly
 one variable  from the set  $X$ is true in the current interval. $\Phi_{\mathrm{orient}}$ is a conjunction of the following formulae.

\begin{enumerate}[(i)] 
\item\label{or1} $[D] (([D] \bot \Rightarrow exactly\_one\_of(\{s_0, s_1, s_2\}) \wedge (s_0 \vee s_1 \vee s_2 \Rightarrow [D] \bot) ))$
\item\label{or2} $[D] ( \langle D \rangle \langle D \rangle \top \Rightarrow \langle D \rangle s_0 \wedge \langle D \rangle s_1 \wedge \langle D \rangle s_2)$
\item\label{or4p} $[D] (L \Rightarrow s_0)$
\item\label{or3} $\langle D \rangle R \wedge \langle D \rangle L$
\item\label{or4} $[D] (L \Rightarrow \neg R)$
\item\label{or5} $[D] ([D][D]\bot \wedge \langle D \rangle L \Rightarrow \neg \langle D \rangle s_2)$
\item\label{or6} $\bigvee_{i \in \{0, 1, 2\}} [D] ([D][D]\bot \wedge \langle D \rangle R \Rightarrow \neg \langle D \rangle s_i)$ 
\end{enumerate}

Formulae (\ref{or1}), (\ref{or2}), and (\ref{or4p}) express the property defined by the conjunction of the four items above (notice, that $[D] \bot$ means
that the current interval is a leaf).

Formula (\ref{or3}) says that there exists an interval labeled with $R$ and an interval labeled with $L$.

 Formula (\ref{or4}) states that intervals labeled with $L$  are also labeled with $s_0$, 
and intervals labeled with $R$ are labeled with $s_2$, so they are leaves.

 Formula (\ref{or5}) guarantees that no interval containing exactly 2 leaves,  which is a super-interval of an interval labeled with $L$, 
 can contain a sub-interval labeled with $s_2$. It implies that an interval labeled with $L$ can 
only have one super-interval containing exactly 2 leaves  --- if there were two, then their common super-interval 
containing 3 leaves  would not have a sub-interval labeled with $s_2$, what would contradict  (\ref{or2}).

Finally, formula (\ref{or6}), finally, works like (\ref{or5}) but for $R$. We have to use disjunction
 in this case since we do not know which $s_i$ is satisfied in the interval labeled with $R$.


In the rest of paper we restrict our attention to models satisfying formula $\Phi_{\mathrm{orient}}$, and treat the  leaf labeled with $L$ as the leftmost element
of the model.

\subsection{Encoding a  Finite Automaton}
In this section we show how to make sure that consecutive leaves of the model, read from $L$ to $R$, are labeled with variables that 
represent a word of a given regular language. 

\begin{lemma}
Let ${\cal A} = \langle \Sigma, {\cal Q},  q^0, {\cal F}, \delta\rangle$, 
where $q^0\in {\cal Q}$, ${\cal F} \subseteq {\cal Q}$, $\delta \subseteq {\cal Q} \times \Sigma \times {\cal Q}$ be a finite--state automaton (deterministic or not, it does not matter).
 
There exists  a formula $\psi_{\cal A}$ of the $D$ fragment of Halpern--Shoham logic 
over alphabet ${\cal Q} \cup \Sigma$ that is satisfiable
(with respect to the valuation of the variables from $\cal Q$) if and only if the word,
over the alphabet $\Sigma$
written in the leaves of the model, read from $L$ to $R$, belongs to the language accepted by $\cal A$. 
\end{lemma}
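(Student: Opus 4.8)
The plan is to build $\psi_{\cal A}$ as a conjunction of $[G]$-guarded implications that force the labeling of the leaves (read left to right from $L$ to $R$, as permitted by $\Phi_{\mathrm{orient}}$) to spell a word together with a threading of an accepting run of ${\cal A}$ through that word. First I would demand that every leaf carries exactly one symbol of $\Sigma$ and exactly one state of ${\cal Q}$, using $exactly\_one\_of$ guarded by $[D]\bot$; the $\Sigma$-labels are the externally given word, the ${\cal Q}$-labels are existentially chosen and encode the run. I would require the leaf labeled $L$ to carry $q^0$ and the leaf labeled $R$ to carry some $q\in{\cal F}$ (both are reachable via $\langle D\rangle L$, $\langle D\rangle R$ from the initial interval, exactly as in $\Phi_{\mathrm{orient}}$).

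The heart of the construction is a transition constraint linking each pair of consecutive leaves. Here the only tool is $D$, which cannot directly talk about ``the next leaf'', so I would use the $s_0,s_1,s_2$ colouring supplied by $\Phi_{\mathrm{orient}}$: two leaves are consecutive iff they are the two leaves of an interval of length $1$, and their $s_i$-colours are two distinct elements of $\{0,1,2\}$ determined by position parity. Concretely, for each ordered colour pair $(i,j)$ with $i\neq j$ and each transition $(p,a,r)\in\delta$, I would write a formula saying: in every interval with exactly two leaves (i.e. $[D][D]\bot \wedge \langle D\rangle\langle D\rangle\top$, or more simply $[D]\bot$ is false but $[D][D]\bot$ holds) whose left leaf — identified by its $s_i$ colour and its position relative to $L$ — satisfies $s_i\wedge p\wedge a$, if the right leaf satisfies $s_j$ then it must satisfy $\bigvee_{(p,a,r)\in\delta} r$; and conversely each non-$L$ leaf must have its state justified by some transition from its predecessor. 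Distinguishing ``left leaf'' from ``right leaf'' inside a 2-leaf interval is done precisely by the orientation machinery already set up: the $s_i$-pattern $0,1,2,0,1,2,\dots$ starting from $L$ tells us, given the two colours present, which one is closer to $L$. So the transition axiom is a finite conjunction over $\delta$, over the (finitely many) ordered pairs of colours, of $[G]$-guarded Boolean combinations of $s_i$, $\Sigma$- and ${\cal Q}$-atoms.

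The correctness argument then splits into soundness and completeness. For soundness, from a model of $\psi_{\cal A}$ the sequence of ${\cal Q}$-labels on the leaves from $L$ to $R$ is, by the transition axioms, a legal run of ${\cal A}$ on the $\Sigma$-word written in the leaves, starting in $q^0$ (leaf $L$) and ending in ${\cal F}$ (leaf $R$); hence the word is accepted. For completeness, given an accepting run, I annotate each leaf with the corresponding state; $\Phi_{\mathrm{orient}}$ already guarantees the $L,R,s_i$ part can be satisfied on any sufficiently long finite order, and the transition axioms hold by construction of the run. One should note the degenerate cases — a model with one or two leaves — and check that the guards behave (e.g.\ the empty word / single-symbol word are handled because then $L$ and $R$ coincide or are adjacent and the acceptance condition reduces correctly).

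The main obstacle, and the step deserving the most care, is the directionality: a $2$-leaf interval sees its two leaves symmetrically, so the formula must break that symmetry to know which leaf is the source and which the target of a transition. My intended resolution is to lean entirely on the orientation colouring from $\Phi_{\mathrm{orient}}$ — the cyclic pattern $s_0 s_1 s_2 s_0\dots$ anchored at $L$ — so that inside any $2$-leaf interval the pair of colours present uniquely determines the left/right roles; this is exactly what items (\ref{or5}) and (\ref{or6}) were engineered to enforce near the endpoints, and the alternation item guarantees it everywhere in between. A secondary subtlety is making sure the ${\cal Q}$-labeling is not forced to be a function of the $\Sigma$-labeling alone (which would fail for nondeterministic ${\cal A}$): using ``each leaf's state is justified by \emph{some} applicable transition from the previous leaf'', i.e.\ a disjunction over matching transitions, keeps the guessing of the run free while still pinning it to be legal.
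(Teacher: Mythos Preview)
Your approach is essentially the paper's: label each leaf with exactly one $\Sigma$-symbol and one ${\cal Q}$-state, anchor $q^0$ at $L$, impose an acceptance condition at $R$, and enforce transitions on every length-$1$ interval, using the cyclic $s_0s_1s_2$ colouring to distinguish the left leaf from the right one. The paper does this slightly more economically by only ranging over the three ``forward'' colour pairs $(s_i,s_{(i+1)\bmod 3})$ rather than all ordered pairs, but the idea is identical.

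There is one small slip worth fixing. With your convention that the \emph{left} leaf of a length-$1$ interval carries the letter being consumed (``left leaf \ldots satisfies $s_i\wedge p\wedge a$'' and the right leaf gets $\bigvee_{(p,a,r)\in\delta} r$), the state stored at a leaf is the state reached \emph{before} reading that leaf's letter. Hence the state at $R$ is the state before the final symbol is consumed, and demanding that $R$ carry a state in ${\cal F}$ accepts the word with its last letter ignored. The paper handles this by requiring instead that $R$ carry some $q$ and $a$ with $(q,a,q')\in\delta$ for a $q'\in{\cal F}$, i.e.\ one more transition step is taken implicitly at the right end.
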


\proof It is enough to write a conjunction of the following properties.
\begin{enumerate}
\item\label{eoaf} In every leaf, exactly one letter from $\Sigma$ is satisfied (so there is indeed a word written in the leaves).
\item Each leaf is labeled with exactly one variable from $\cal Q$.
\item For each interval with the length $1$, if this interval contains an interval labeled with $s_i$,  with $a \in \Sigma$
 and with $q\in {\cal Q}$ and another interval labeled with $s_{(i+1) \texttt{ mod } 3}$, and with  $q' \in {\cal Q}$, 
 then $\langle q, a, q' \rangle \in \delta$.
\item Interval labeled with $R$ is labeled with such  $q\in \cal Q$  and $a \in \Sigma$ that  $\langle q, a, q' \rangle \in \delta$ for some $q'\in \cal F$.
\item\label{eoal} Interval labeled with $L$  is labeled with $q^0$.
\end{enumerate}

Clearly, a model satisfies properties \ref{eoaf}-\ref{eoal} 
if and only if its leaves are labeled  with an accepting run of ${\cal A}$ on the word over $\Sigma $ written in its leaves.
The formulae of the $D$ fragment of Halpern--Shoham logic expressing properties  \ref{eoaf}-\ref{eoal}  are not hard to write:

\begin{enumerate}
\item $[G] (([D] \bot \Rightarrow exactly\_one\_of ({ \Sigma}) ) \wedge (\bigvee \Sigma \Rightarrow [D] \bot) )$
\item $[G] (([D] \bot \Rightarrow exactly\_one\_of ({\cal Q})) \wedge (\bigvee {\cal Q} \Rightarrow [D] \bot))$
\item $[G] ([D][D] \bot \wedge \langle D \rangle s_i \wedge \langle D \rangle s_{i+1 \texttt{ mod } 3} \Rightarrow \bigvee_{\langle q, a, q' \rangle \in \delta} \langle D \rangle (s_i \wedge q \wedge a) \wedge \langle D \rangle (s_{i+1 \texttt{ mod } 3} \wedge q'))$, for each $i \in \{0, 1, 2\}$
\item $[G] (R \Rightarrow \bigvee_{\langle q, a, q' \rangle \in \delta, q' \in \cal F} (q \wedge a))$
\item $[G] (L \Rightarrow q^0)$
\end{enumerate}

Now, let $\cal A$ be a finite automaton recognizing  language $L_A$ from Section \ref{jezykregularny} and put $\Phi_{L_A}=\psi_{\cal A}$.

\subsection{A Cloud -- how to build it}\label{cloud}\label{measuring}


We still need to make sure,  that  there exists $n$ such that each configuration (except of the last one) 
has  length $n-1$  and that each shade has the length exactly $n$. Let us start with:

\begin{definition}
Let $\mathrm{M}=\langle \odcinki, \gamma \rangle$ be a model and $p$ a variable.
We call $p$ {\em a cloud} if there exists $k\in \mathbf{N}$ such that $p\in \gamma([a,b])$ if and only if 
the length of $[a,b]$ is exactly $k$.
\end{definition}

So one can view a cloud as a set of all intervals of some fixed length.
 Notice, that if the current interval has length $k$ then   exactly
$k+1$ leaves are reachable from this segment with the operator $D$.

We want to write a formula of the language $D$ fragment of Halpern-Shoham logic saying that $p$ is a cloud.
In order to do that we use an additional variable $e$. 
The idea is that an interval $[a, a+n]$ satisfies $e$ iff $[a+1, a+n+1]$ does not.

 \begin{figure}[ht!]
  \centering
\includegraphics[width=240pt]{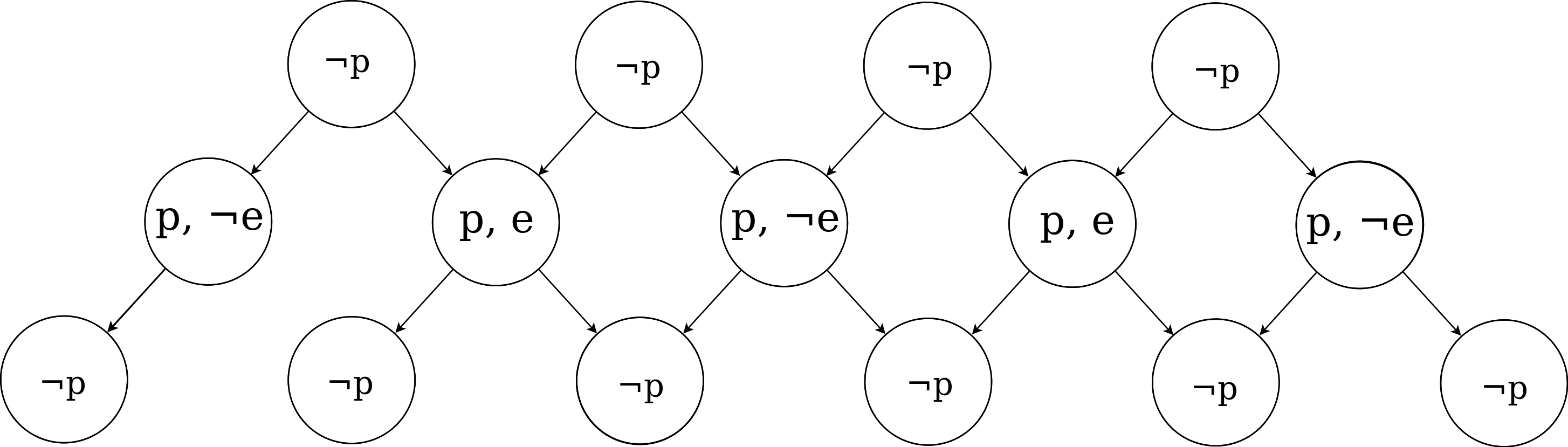}
  \caption{An example of a cloud.}\label{pcloud}
\end{figure}

Let $\Phi_{\mathrm{cloud}}$ be a conjunction of the following formulae.
\begin{enumerate}
\item $\langle D \rangle p$ --- there exists at least one point that satisfies $p$.
\item $[D] (p \Rightarrow [D] \neg p)$ --- intervals labeled with $p$ cannot contain intervals labeled with $p$.
\item $[G] ((\langle D \rangle p) \Rightarrow (\langle D \rangle (p \wedge e)) \wedge (\langle D \rangle (p \wedge \neg e)))$ --- each 
interval that contains an interval labeled with $p$ actually contains at least two such intervals --- one labeled with $e$ and one with $\neg e$. 
\end{enumerate}

\begin{lemma}
If $\mathrm{M}, [a_\mathrm{M}, b_\mathrm{M}] \models \Phi_{\mathrm{cloud}}$, where $a_\mathrm{M}$ and $b_\mathrm{M}$ are endpoints of $\mathrm{M}$, then $p$ is a cloud.
\end{lemma}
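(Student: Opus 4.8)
The goal is to show that the three clauses of $\Phi_{\mathrm{cloud}}$ force the variable $p$ to hold exactly on the intervals of one fixed length. Write $M,[a_M,b_M]\models\Phi_{\mathrm{cloud}}$ and recall that $[G]$ quantifies over the current interval and all its subintervals, so every clause applies to every subinterval of the whole model. First I would observe that clause~(1) guarantees that $p$ holds on at least one proper subinterval $[a,b]$ of $[a_M,b_M]$; let $k$ be the \emph{smallest} length of any interval on which $p$ holds (it exists since lengths are non-negative integers). Clause~(2), applied to any $p$-interval, says a $p$-interval has no proper $p$-subinterval; combined with the choice of $k$ this is consistent, and it will be used to rule out $p$ holding on intervals of two different lengths that are nested.

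The heart of the argument is clause~(3): every interval that \emph{sees} a $p$-interval must in fact see two of them, one satisfying $e$ and one satisfying $\neg e$. The plan is to run an induction on the length of an interval $I$ to show: if $I$ has length $\ell\ge k$ then $I$ sees a $p$-interval, hence by clause~(3) $I$ contains a $p$-interval with $e$ and a $p$-interval with $\neg e$; moreover these two $p$-subintervals must be distinct intervals of the \emph{same} length, because if $I$'s length is exactly $k$ then the only $p$-subintervals of $I$ available (by minimality of $k$ and by clause~(2)) are $I$ itself or length-$k$ intervals, and two distinct ones of length $k$ inside a length-$k$ interval is impossible unless $I$ itself is one of them — so actually for $\ell=k$ the interval $I$ is forced to satisfy $p$, and for it to satisfy both $e$ and $\neg e$ is impossible. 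The clean way to package this: show by induction on $\ell-k$ that every interval of length exactly $k$ satisfies $p$. The base/key step is to take a length-$(k{+}1)$ interval $J$ that sees the known length-$k$ $p$-interval; clause~(3) gives $J\supseteq$ a $p$-interval with $e$ and one with $\neg e$; each such $p$-subinterval has length $\ge k$ and is a \emph{proper} subinterval of $J$, hence has length exactly $k$; since the only two proper length-$k$ subintervals of a length-$(k{+}1)$ interval $J=[a,a{+}k{+}1]$ are $[a{+}1,a{+}k{+}1]$ and $[a,a{+}k]$, we learn both of these satisfy $p$, one with $e$ and one with $\neg e$ (matching the intended reading ``$[a,a{+}n]\models e$ iff $[a{+}1,a{+}n{+}1]\not\models e$''). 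Sliding this window along the model — any two consecutive length-$k$ intervals sit inside a common length-$(k{+}1)$ interval — propagates $p$ to \emph{all} length-$k$ intervals of the model.

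It remains to show $p$ holds on \emph{no} interval of length $\ne k$. Lengths $<k$ are excluded by minimality of $k$. For a length $\ell>k$: if some interval $I$ of length $\ell$ satisfied $p$, then since $\ell>k$ and we have shown a length-$k$ subinterval of $I$ satisfies $p$, clause~(2) applied to $I$ is violated (the length-$k$ subinterval is a proper $p$-subinterval of the $p$-interval $I$). Hence $p$ holds exactly on the length-$k$ intervals, i.e. $p$ is a cloud with $k$ as the witnessing constant, which is what the lemma asserts.

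\textbf{Main obstacle.} The delicate point is the induction that pushes $p$ onto every length-$k$ interval: one has to be careful that clause~(3)'s two witnesses are genuinely forced to be the two ``shift-by-one'' neighbours and not, say, a length-$k$ interval together with some longer interval also labelled $p$. This is exactly where clause~(2) (no nested $p$'s) plus minimality of $k$ do the work — inside a length-$(k{+}1)$ interval every proper subinterval that could carry $p$ has length exactly $k$ — so I would state and prove a small auxiliary claim (``every $p$-interval has length exactly $k$, where $k$ is the minimum'') up front, after which the window-sliding argument is routine.
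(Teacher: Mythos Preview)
Your approach is correct and close to the paper's, but with one genuine simplification. Both proofs use the same skeleton: pick a $p$-interval, look at the length-$(\text{that}+1)$ super-interval, invoke clause~(3) to find two distinct $p$-subintervals (one with $e$, one without), identify them with the two ``shift-by-one'' neighbours, and then slide; finally clause~(2) kills all other lengths. The difference is in the identification step. The paper does \emph{not} introduce a minimal length~$k$; it works with an arbitrary $p$-interval $[x,y]$, and to rule out that the second $p$-subinterval of $[x,y{+}1]$ sits strictly inside, it makes a second pass: if the other witness were $[u,y{+}1]$ with $u>x{+}1$, look at $[u{-}1,y{+}1]$, apply clause~(3) again, and force a $p$-interval into $[u{-}1,y]\subsetneq[x,y]$, contradicting clause~(2). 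Your use of minimality of $k$ short-circuits this: inside a length-$(k{+}1)$ window every proper subinterval has length $\le k$ and every $p$-interval has length $\ge k$, so the two witnesses are automatically the two length-$k$ neighbours. That is a cleaner argument and buys you exactly the awkward step the paper spends a paragraph on.

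Two small points of polish. First, your exploratory paragraph about intervals $I$ of length exactly $k$ (``for $\ell=k$ the interval $I$ is forced to satisfy $p$, and for it to satisfy both $e$ and $\neg e$ is impossible'') is muddled and should simply be deleted: a length-$k$ interval need not \emph{see} any $p$-interval, so clause~(3) says nothing about it; your ``clean way'' with the length-$(k{+}1)$ window $J$ is the correct packaging. Second, your ``auxiliary claim'' cannot be proved up front in the form you state it: all you have up front is length $\ge k$ (by minimality); the upper bound comes only at the end, after the sliding argument has put $p$ on every length-$k$ interval and clause~(2) then forbids longer ones. With those edits your proof goes through.
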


\begin{proof}
We will prove that if an interval $[x, y]$ is labeled with $p$, then also $[x+1, y+1]$ is labeled with $p$. 
A symmetric proof shows that the same holds for $[x-1, y-1]$, so 
all the intervals of length equal to $m$, where $m$ is the length of $[x,y]$, are labeled with $p$.
 
This will imply  that no other intervals can be labeled with $p$ and $p$ is indeed a cloud. This is because each such interval either
 has a length greater than $m$, and thus contains an interval of length $m$, and as such labeled with $p$, or
has a length smaller than $m$, and is contained in an interval labeled by $p$, in both cases contradicting (ii).

Consider an interval $[x, y]$ labeled with $p$. Interval  $[x, y+1]$ contains an interval labeled with $p$, 
so it has to contain two different intervals labeled with $p$ -- one labeled with $e$ and the other one with $\neg e$.
 Suppose without loss of generality that $[x, y]$  is the one labeled with $e$, and let us call the second one $[u, t]$. If $t<y+1$, then $[u,t]$ 
 is a sub-interval of $[x, y]$ and is labeled with $p$, a contradiction. So $t=y+1$.
 
 Let us assume that $u>x+1$. The interval $[u-1, y+1]$ must contain two different intervals 
labeled with $p$. One of them is $[x, y+1]$, and it cannot contain another interval labeled with $p$, 
so the other one must be a sub-interval of $[u-1, y]$. But then it is a sub-interval of $[x, y]$ (because $u-1 >x+1-1 = x$) 
which also is labeled with $p$ --- a contradiction. So $u=x+1$.
\end{proof}

\subsection{A Cloud -- how to use it.}

Let us now concentrate on models which satisfy $\Phi_{\mathrm{orient}} \wedge \Phi_{L_A} \wedge \Phi_{\mathrm{cloud}}$. Since $\Phi_{\mathrm{cloud}}$
is satisfied then $p$ is a cloud. Let $n-1$ denote number of leaves contained in the intervals that form the cloud.
Our goal is to write a formula  $\Phi_{\mathrm{length}}$ that would guarantee the following properties:

\begin{enumerate}
\item Configurations and shades are not too short. If you see two states (i.e. more than an entire configuration) or an entire
shade, then you must see a lot, at least $n$ leaves. So you must be high enough. Higher than the cloud. 

\item Configurations and shades are not too long. If you only see an interior of a configuration (i.e. you do not see states)
or an interior of some shade, then you do not see much, at most $n-2$ leaves. So you must be under the cloud. 
\end{enumerate}

Once we do that, the formula  $\Psi =\Phi_{\mathrm{orient}} \wedge \Phi_{L_A} \wedge \Phi_{\mathrm{cloud}}\wedge \Phi_{\mathrm{length}}$ will be 
satisfiable if and only if there exists a word satisfying the  conditions from Lemma \ref{prawie} (ii) -- it is straightforward how
to translate such a word into a model of $\Psi$  and vice versa. 

So put $\Phi_{\mathrm{length}} = \Phi_{\mathrm{length}}^{1, c} \wedge \Phi_{\mathrm{length}}^{1, s} \wedge \Phi_{\mathrm{length}}^{2, c} \wedge \Phi_{\mathrm{length}}^{2, s}$ where:

$\Phi_{\mathrm{length}}^{1, c} = [G] (\bigwedge_{q\in Q, q' \in Q'}  (\langle D \rangle q \wedge \langle D \rangle q') \Rightarrow \langle D \rangle p )$

$\Phi_{\mathrm{length}}^{2, c} = [G] (\bigwedge_{q\in Q}  [D] \neg q \Rightarrow \neg p \wedge [D] \neg p)$

Formulae for shades are a little bit more complex. Let $F_l$ ($F'_l, S_l, S'_l, F, F', S, S', F_r, F'_r, S_r, S'_r$ resp.) be a set of symbols that contain $f_l$ ($f'_l, s_l, s'_l, f, f', s, s', f_r, f'_r, s_r, s'_r$ resp.), and ${\cal T} = \{\langle F_l, F, F_r\rangle, \langle F'_l, F', F'_r\rangle, \langle S_l, S, S_r\rangle , \langle S'_l, S', S'_r\rangle \}$. 

\[
\begin{array}{@{}r@{}c@{}l@{}}
\Phi_{\mathrm{length}}^{1, s} &=&  [G](\bigwedge_{\langle T_l, T, T_r \rangle\in {\cal T}} (\langle D \rangle \bigvee T_l \wedge \langle D \rangle \bigvee T_r) \Rightarrow \langle D \rangle  p)\\

\Phi_{\mathrm{length}}^{2, s} &=&\displaystyle [G](\bigwedge_{\langle T_l, T, T_r\rangle \in {\cal T}} (\langle D \rangle \bigvee T \wedge \neg \langle D \rangle \bigvee (T_l \cup T_r))\\

&& \Rightarrow \neg p \wedge [D] \neg  p)
\end{array}
\]

\Section{Proof of Theorem \ref{main-discrete}}\label{th2}

\todo{We can lift a cloud!}

\clearpage	
\newpage	
\end{document}